\documentclass[lettersize,journal]{IEEEtran}

\usepackage{amssymb}
\usepackage{amsmath}
\usepackage{graphicx}
\usepackage{cite}
\usepackage{dsfont}

\usepackage[binary-units]{siunitx}
\usepackage[caption=false,font=footnotesize]{subfig}
\usepackage{tabularx}
\usepackage{multirow}
\usepackage{diagbox}
\usepackage{xcolor}
\newtheorem{proposition}{Proposition}
\newtheorem{remark}{Remark}

\begin{document}
\title{Distributed  vs. Centralized Precoding in Cell-Free Systems: Impact of Realistic Per-AP Power Limits}   

\author{Wei~Jiang,~\IEEEmembership{Senior~Member,~IEEE,}~and~
        Hans~Dieter~Schotten
\thanks{W. Jiang and H. D. Schotten are with German Research Center for Artificial Intelligence (DFKI),  67663 Kaiserslautern, Germany, and are also with the department of Electrical and Computer Engineering, RPTU University Kaiserslautern-Landau, Germany. (e-mail: wei.jiang@dfki.de, schotten@eit.uni-kl.de).}
}

\maketitle

\begin{abstract}  
   In cell-free massive MIMO, centralized precoding is {theoretically known} to {remarkably} outperform its distributed counterparts, albeit {with} high implementation complexity. However, this letter highlights a practical limitation {often overlooked:} {widely used closed-form} centralized {precoders} are typically derived under a sum-power constraint, which often demands unrealistic power allocation that exceeds hardware capabilities. {When two simple heuristics (global power scaling and local normalization) are applied to enforce the per-AP instantaneous power constraint}, the centralized performance superiority disappears, making distributed precoding {a robust option}.
\end{abstract}
\begin{IEEEkeywords}
Cell-free massive MIMO, centralized precoding, distributed precoding, local partial precoding, max-min fairness, partial precoding, power constraint, {user-centric clustering} 
\end{IEEEkeywords}

\IEEEpeerreviewmaketitle

\section{Introduction}

\IEEEPARstart{C}{ell-free} (CF) massive multi-input multi-output (CF-mMIMO) is a promising architecture owing to its provision of uniform service quality. { From an information-theoretic standpoint, centralized precoding leveraging system-wide channel-state information (CSI) is inherently superior to distributed approaches, as the set of feasible distributed precoders is a subset of centralized ones. }  

In \cite{Ref_nayebi2017precoding, du2021cell}, it is shown that centralized zero-forcing (ZF) precoding significantly outperforms distributed conjugate beamforming (CBF). 
Then, \cite{Ref_bjornson2020scalable} suggests that a centralized precoder using partial minimum mean square error (P-MMSE) surpasses locally implemented P-MMSE (LP-MMSE) in scalable CF systems with dynamic cooperation clustering.  
Even in decentralized frameworks, a theoretically optimal precoder called team MMSE is outperformed by its centralized equivalent \cite{miretti2021team}. 
The work \cite{Ref_atzeni2021distributed} shows that cooperative precoding can only approach centralized performance considering its resilience to CSI errors.
A decentralized scheme proposed in \cite{Ref_shi2023decentralized} attains roughly 80\% of the achievable rate of centralized precoding in cell-free URLLC settings. 
Recently, \cite{Femenias2024} argues that improved P-MMSE precoding is sufficiently powerful to eliminate the need for downlink channel estimation, an advantage where its distributed counterpart often {falls} short.

However, { centralized precoders are typically derived under a sum-power constraint}, where the total power budget of the whole system {can be}  arbitrarily allocated among {distributed} access points (APs). {In CF distributed deployments, some APs are very close to a user while others are very distant. Such a high path-loss disparity spanning several orders of magnitude results in a \textit{power concentration} effect, where centralized precoders often concentrate transmission power on a small subset of APs to achieve the desired global spatial directivity. Such “cross-AP” power assignment is unrealistic as a few APs are required to transmit at power levels far exceeding their hardware limits, resulting in {a}  performance drop. }

{Hence, this letter aims to study centralized precoding under realistic instantaneous per-AP power constraint, which is critical for operating within the linear region of power amplifiers (PAs). While prior works \cite{yu2007transmitter, miretti2024duality} established optimal centralized designs under \textit{average} power constraints, extending these to strict instantaneous limits often leads to intractable optimization. Furthermore, iterative solvers applied to small-scale fading (millisecond scale) impose prohibitive computational burdens and processing delays, which can result in overwhelming performance degradation due to channel aging \cite{Ref_jiang2021impactcellfree}. Consequently, this work focuses on standard closed-form linear precoders (e.g., ZF and MMSE) that are widely adopted for their scalability in large-scale deployments.
}

The paper is organized as {follows:} Sec. II introduces the system and channel model. Sec. III formalizes centralized and distributed downlink strategies and derives their achievable spectral efficiency (SE). Sec. IV presents the precoding design and normalization, highlighting how per-AP constraints can degrade centralized performance. Numerical results are discussed in Sec. V, followed by conclusions in Sec. VI.


\section{System Model}

Consider the downlink of a CF-mMIMO system comprising $L$ APs, each equipped with $N_t$ antennas, distributed throughout a coverage area to serve $K$ single-antenna users.  The sets of indices for APs and users are represented by $\mathbb{L}= \{1,\ldots,L\}$ and $\mathbb{K}=\{1,\ldots,K\}$, respectively. Under a user-centric dynamic clustering strategy, user \(k\) is served by a subset of nearby APs, denoted by \(\mathbb{L}_k \subseteq \mathbb{L}\). Consequently, each AP $l$ serves a dynamically assigned subset of users, i.e., \(
\mathbb{K}_l = \{\,k\in\mathbb{K} : l\in\mathbb{L}_k\,\}
\). Leveraging the block fading assumption and time-division duplexing reciprocity, the channel remains constant within each coherence block, which is divided into three phases: uplink training, uplink data transmission (neglected here), and downlink data transmission.

We adopt a general model with spatially-correlated Rician fading for the channel $\mathbf{h}_{kl} \in \mathbb{C}^{N_t}$ between AP $l$ and user $k$. Parameterized by the large-scale attenuation $\beta_{kl}$ and the Rician $K$-factor $\kappa_{kl} \ge 0$, this model is constituted by a deterministic line-of-sight (LoS) contribution and a correlated diffuse component, i.e., 
\begin{equation}
\label{eq:rician-model}
\mathbf{h}_{kl}
= 
\sqrt{ \frac{\beta_{kl}\,\kappa_{kl}}{1+\kappa_{kl}}}\,e^{j\theta_{kl}}\,\mathbf{a}_l( {\vartheta_{kl}} )
+\sqrt{\frac{1}{1+\kappa_{kl}}}\,\,\mathbf{g}_{kl}
,
\end{equation}
where
\begin{itemize}
  \item $\theta_{kl}$ is a random phase shift, typically modeled as uniform distribution: $\theta_{kl} \sim \mathcal{U}[0, 2\pi)${.}
  \item $\mathbf{a}_l({\vartheta_{kl}})\in\mathbb{C}^{N_t}$ is the array steering vector for angle ${\vartheta_{kl}}$.
  \item $\mathbf{g}_{kl}\sim\mathcal{CN}(\mathbf{0},\mathbf{R}_{kl})$ is non-line-of-sight (NLoS) component, where the positive semi-definite covariance matrix $\mathbf{R}_{kl}\succeq\mathbf{0}$ captures the spatial correlation.
\end{itemize}
Because $\mathbf{g}_{kl}$ is Gaussian, the distribution of $\mathbf{h}_{kl}$ (conditioned on $\theta_{kl}, {\vartheta_{kl}}$, and $\mathbf{R}_{kl}$) is complex Gaussian, i.e.,
$\mathbf{h}_{kl}\sim\mathcal{CN}(\boldsymbol{\mu}_{kl},\mathbf{Q}_{kl})$,
with the LoS mean and covariance matrix given by:
\begin{align}
\boldsymbol{\mu}_{kl} &= \sqrt{ \frac{\beta_{kl}\,\kappa_{kl}}{1+\kappa_{kl}}}\,e^{j\theta_{kl}}\,\mathbf{a}_l( {\vartheta_{kl}}  ), \\
\mathbf{Q}_{kl} &= \mathbb{E}\left[ (\mathbf{h}_{kl} - \boldsymbol{\mu}_{kl})(\mathbf{h}_{kl} - \boldsymbol{\mu}_{kl})^H \right] = \frac{1}{\kappa_{kl} + 1} \mathbf{R}_{kl}.
\end{align}

Let $\{\varphi_1,\dots,\varphi_{\tau_p}\}$ be orthogonal pilots with $\|\varphi_t\|^2=\tau_p$. The users simultaneously send their assigned orthogonal pilot during the uplink training phase. 
After correlating the received pilot matrix at AP \(l\) with user \(k\)'s pilot we obtain the sufficient statistic 
\(
\boldsymbol{\phi}_{kl}
= \sum_{i\in\mathcal{P}_k} \sqrt{p_u\tau_p}\,\mathbf{h}_{il} + \mathbf{n}_{l}
\),
where $\mathcal{P}_k$ is the set of users sharing the same pilot as user $k$ (suffering pilot contamination), thermal noise $\mathbf{n}_{l}\sim\mathcal{CN}(\mathbf{0},\sigma^2_n\mathbf{I}_{N_t})$ with variance $\sigma^2_n$, and $p_u$ is the uplink transmission power.
Define
\(
\boldsymbol{\Psi}_{k l} \triangleq p_u\tau_p\sum_{i\in\mathcal{P}_k} \,\mathbf{Q}_{il} + \sigma_n^2\mathbf{I}_{N_t}
\), the conditional (phase-aware) MMSE estimator for user $k$ is expressed by
$
    \hat{\mathbf{h}}_{kl} = \boldsymbol\mu_{kl} + \sqrt{p_u\tau_p}\,\mathbf{Q}_{kl}\,\boldsymbol{\Psi}_{k l}^{-1}
\big(\boldsymbol{\phi}_{kl} - \mathbb{E}[\boldsymbol{\phi}_{kl}]\big)
$.
As a result, the estimate and estimation error $\tilde{\mathbf{h}}_{kl} = \mathbf{h}_{kl} - \hat{\mathbf{h}}_{kl}$ are distributed as:
\begin{align}
\hat{\mathbf{h}}_{kl} &\sim \mathcal{CN}\left( \boldsymbol{\mu}_{kl},\ p_u \tau_p \mathbf{Q}_{kl}\,\boldsymbol{\Psi}_{k l}^{-1}\mathbf{Q}_{kl} \right), \\
\tilde{\mathbf{h}}_{kl} &\sim \mathcal{CN}\left( \mathbf{0},\ \mathbf{Q}_{kl} - p_u \tau_p \mathbf{Q}_{kl}\,\boldsymbol{\Psi}_{k l}^{-1}\mathbf{Q}_{kl} \right) \triangleq \mathcal{CN}\left( \mathbf{0},\ \boldsymbol{\Theta}_{kl} \right).
\end{align}

\section{Downlink Transmission}
\label{sec:transmission_schemes}

In the downlink, the information symbols for the $K$ users are denoted by $\mathbf{s} = [s_1,\ldots,s_K]^T$, whose elements are zero-mean, unit-variance, and mutually uncorrelated, i.e., $\mathbb{E}[\mathbf{s}\mathbf{s}^H] = \mathbf{I}_K$. This section develops signal models and derives the SE expressions for both distributed and centralized strategies. 

\subsection{Distributed Strategy}
\label{subsec:distributed}
In this case, each AP $l$ independently serves its associated users $k\in \mathbb{K}_l$, producing its transmitted signal:
\begin{equation} \label{eq:composite_tx_sig_dist}
    \mathbf{s}_{l} = \sqrt{p_a} \sum\nolimits_{k \in \mathbb{K}_l} \sqrt{\eta_{kl}}\, \mathbf{w}_{kl}  s_k,
\end{equation}
where $p_a$ is the {maximal transmit power per AP, locked by the saturation power of PAs}, $\eta_{kl} \in [0, 1]$ is the power coefficient assigned by AP $l$ for user $k$, satisfying $\sum_{k\in \mathbb{K}_l} \eta_{kl}\leqslant 1$, and $\mathbf{w}_{kl} \in \mathbb{C}^{N_t}$ is the local precoding vector, normalized such that $\mathbb{E}\left[ \|\mathbf{w}_{kl}\|^2 \right] = 1$.  
User $k$ observes $y_k  =   \sum_{l \in \mathbb{L}} \mathbf{h}_{kl}^T \mathbf{s}_l  + n_k$,
where $n_k \sim \mathcal{CN}(0, \sigma_n^2)$. Applying \eqref{eq:composite_tx_sig_dist} yields \begin{equation} \label{eq:downlinkModel}
    y_k \;=\; \sqrt{p_a}\sum\nolimits_{l\in\mathbb{L}}\sum\nolimits_{k'\in\mathbb{K}_l} \sqrt{\eta_{k'l}}\,\mathbf{h}_{kl}^T\mathbf{w}_{k'l}\,s_{k'} \;+\; n_k.
\end{equation} 
\begin{proposition}
\label{prop:se_dist}
An achievable downlink SE for user $k$ under distributed precoding is given by
$R_k = \mathbb{E} \left[ \log_2 \left( 1 + \gamma_k \right) \right]$,
where the instantaneous effective signal-to-interference-plus-noise ratio (SINR) is
\begin{equation} \label{eq:sinr_dist}
    \gamma_k = \frac{ 
        \left|   \sum\nolimits_{l \in \mathbb{L}_k} \sqrt{\eta_{kl}} \mathbb{E} \left[ \mathbf{h}_{kl}^T \mathbf{w}_{kl} \right] \right|^2
    }{ \left\{ \begin{aligned}
         \sum\nolimits_{k' \in \mathbb{K}} &\sum\nolimits_{l \in \mathbb{L}_{k'}} \eta_{k'l}  \mathbb{E} \left[ \left| \mathbf{h}_{kl}^T \mathbf{w}_{k'l} \right|^2 \right] \\
        & - \sum\nolimits_{l \in \mathbb{L}_k} \eta_{kl} \left| \mathbb{E} \left[ \mathbf{h}_{kl}^T \mathbf{w}_{kl} \right] \right|^2 +\dfrac{\sigma_n^2}{p_a}  
    \end{aligned}         \right\}   
    }.
\end{equation}
\end{proposition}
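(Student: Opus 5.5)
This is the standard use-and-then-forget (hardening) bound, applied to the received signal model \eqref{eq:downlinkModel}. First I will rewrite $y_k$ by adding and subtracting the mean of the desired-signal gain. Set $b_k \triangleq \sum_{l\in\mathbb{L}_k}\sqrt{\eta_{kl}}\,\mathbb{E}[\mathbf{h}_{kl}^T\mathbf{w}_{kl}]$. Because AP $l$ transmits $s_{k'}$ exactly when $k'\in\mathbb{K}_l$, equivalently $l\in\mathbb{L}_{k'}$, the double sum in \eqref{eq:downlinkModel} can be re-indexed as $\sum_{k'\in\mathbb{K}}\sum_{l\in\mathbb{L}_{k'}}$. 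This gives
\begin{equation*}
\frac{y_k}{\sqrt{p_a}} = b_k s_k + \Big(\sum_{l\in\mathbb{L}_k}\sqrt{\eta_{kl}}\,\mathbf{h}_{kl}^T\mathbf{w}_{kl}-b_k\Big)s_k + \sum_{k'\neq k}\sum_{l\in\mathbb{L}_{k'}}\sqrt{\eta_{k'l}}\,\mathbf{h}_{kl}^T\mathbf{w}_{k'l}s_{k'} + \frac{n_k}{\sqrt{p_a}}.
\end{equation*}
The user is assumed to know only the deterministic constant $b_k$. The second, third and fourth terms form an effective noise $v_k$.

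Next I will show that $v_k$ is uncorrelated with $b_ks_k$. The symbols are zero-mean, unit-variance and mutually uncorrelated, and they are independent of the channels and precoders. So $s_{k'}$ for $k'\neq k$ and $n_k$ are uncorrelated with $s_k$. The beamforming-uncertainty term has zero-mean gain and is therefore uncorrelated with $s_k$: its cross-moment is $\mathbb{E}[(\text{gain}-b_k)]\,\mathbb{E}[|s_k|^2]=0$. The standard worst-case uncorrelated-noise argument then applies: Gaussian noise is worst-case for a given variance, as in Hassibi--Hochwald and Marzetta's bound, and the textbooks by Björnson et al. It gives achievability of $\log_2(1+|b_k|^2/\mathbb{E}[|v_k|^2])$.

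The main computation is $\mathbb{E}[|v_k|^2]$. Using the uncorrelatedness of the symbols, the three terms add in power. The interference term contributes $\sum_{k'\neq k}\mathbb{E}\big[|\sum_{l\in\mathbb{L}_{k'}}\sqrt{\eta_{k'l}}\,\mathbf{h}_{kl}^T\mathbf{w}_{k'l}|^2\big]$. The self term contributes $\mathbb{E}\big[|\sum_{l\in\mathbb{L}_k}\sqrt{\eta_{kl}}\,\mathbf{h}_{kl}^T\mathbf{w}_{kl}|^2\big]-|b_k|^2$. The noise contributes $\sigma_n^2/p_a$.

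The stated expression has single sums over $l$ inside the expectations, so I must also show that the cross terms $l\neq l'$ cancel. For the interference terms this follows if $\mathbf{h}_{kl}^T\mathbf{w}_{k'l}$ for different APs are independent and at least one factor is zero-mean. For the self term, the cross terms equal products of means; they cancel against the matching cross terms in $|b_k|^2$, leaving $\sum_l\eta_{kl}|\mathbb{E}[\mathbf{h}_{kl}^T\mathbf{w}_{kl}]|^2$ subtracted.

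\textbf{This is the step I expect to be delicate.} Local precoders $\mathbf{w}_{k'l}$ depend only on AP $l$'s local estimates, and channels at different APs are independent. Still, interference cross-AP means need not vanish, for instance with LoS means $\boldsymbol{\mu}_{kl}$. To get the per-AP form exactly, I would invoke independence across APs together with the random phases $\theta_{kl}$, which are independent across APs and uniform. These make $\mathbb{E}[\mathbf{h}_{kl}^T\mathbf{w}_{k'l}]=0$ for $k'\neq k$ when the precoder carries user $k'$'s independent phase, and they keep the products of means consistent for the desired user. Failing that, I would state the per-AP form as the expression obtained under this independence and zero-mean assumption.

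Finally, I will add back the $k'=k$ terms so that the sum runs over all $k'\in\mathbb{K}$ while subtracting $\sum_{l\in\mathbb{L}_k}\eta_{kl}|\mathbb{E}[\cdot]|^2$. This gives the denominator of \eqref{eq:sinr_dist}. The resulting $\gamma_k$ is deterministic, so the outer expectation in $R_k$ is trivial.
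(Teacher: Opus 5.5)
Your route is the paper's own: its proof just appeals to the standard use-and-then-forget (hardening) bound. Several of your steps are correct:
\begin{itemize}
\item the re-indexing $\sum_{l\in\mathbb{L}}\sum_{k'\in\mathbb{K}_l}=\sum_{k'\in\mathbb{K}}\sum_{l\in\mathbb{L}_{k'}}$;
\item the uncorrelatedness argument;
\item the desired-user bookkeeping, where independence across APs turns $\mathbb{E}[|\sum_{l}\sqrt{\eta_{kl}}\,\mathbf{h}_{kl}^T\mathbf{w}_{kl}|^2]-|b_k|^2$ into $\sum_{l}\eta_{kl}\,\mathrm{Var}(\mathbf{h}_{kl}^T\mathbf{w}_{kl})$.
\end{itemize}
The gap is exactly at the step you flagged, and your proposed fix fails in this paper's model. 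Write $m_{k'l}\triangleq\mathbb{E}[\mathbf{h}_{kl}^T\mathbf{w}_{k'l}]$. The uniform phases $\theta_{kl}$ only remove the LoS contribution to $m_{k'l}$. The model has pilot contamination (the sets $\mathcal{P}_k$, and $\tau_p=5<K=10$ in the simulations). For $k'\in\mathcal{P}_k\setminus\{k\}$, the estimate $\hat{\mathbf{h}}_{k'l}$ is computed from the same observation $\boldsymbol{\phi}_{kl}$, so it contains the NLoS part of $\mathbf{h}_{kl}$. The same holds for $\mathbf{w}_{k'l}$ under MR, RZF and MMSE, and normalization does not remove this correlation. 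For instance, for MR with long-term normalization, $m_{k'l}$ is proportional to $p_u\tau_p\,\mathrm{tr}(\mathbf{Q}_{kl}\boldsymbol{\Psi}_{kl}^{-1}\mathbf{Q}_{k'l})$. This is the classical coherent pilot-contamination term, and it is positive at every AP when the NLoS fading is spatially uncorrelated. The surviving cross-AP terms are then positive:
\[
\sum_{l\neq l'}\sqrt{\eta_{k'l}\eta_{k'l'}}\,m_{k'l}m_{k'l'}^{*}=|\textstyle\sum_{l}\sqrt{\eta_{k'l}}\,m_{k'l}|^2-\sum_{l}\eta_{k'l}|m_{k'l}|^2 .
\]
So what the hardening bound actually certifies has $\mathbb{E}[|\sum_{l\in\mathbb{L}_{k'}}\sqrt{\eta_{k'l}}\,\mathbf{h}_{kl}^T\mathbf{w}_{k'l}|^2]$ in the denominator, which exceeds the per-AP expression. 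The stated $\gamma_k$ is then optimistic and is not established by this argument.

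Consequently, the per-AP form holds only under the extra hypothesis $m_{k'l}=0$ for all $k'\neq k$ and $l\in\mathbb{L}_{k'}$. Your fallback of stating the result under that hypothesis is the right way to close the proof; the paper's one-line proof does not address this point either.

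The hypothesis is true for every $k'\notin\mathcal{P}_k$, but the justification is a joint rotation, not the LoS phases alone. Multiply by a common $e^{j\psi}$ the LoS phases and NLoS vectors at AP $l$ of all users sharing $k'$'s pilot, together with the noise in $\boldsymbol{\phi}_{k'l}$. This rotation:
\begin{itemize}
\item preserves the joint law;
\item maps $\hat{\mathbb{H}}_l\mapsto\hat{\mathbb{H}}_l\mathbf{D}$ with $\mathbf{D}$ diagonal unitary;
\item multiplies $\mathbf{d}_{k'l}$ by $e^{-j\psi}$ for all three precoders in Table~\ref{tab:complexity_summary}, so $\|\mathbf{d}_{k'l}\|$ is unchanged;
\item leaves $\mathbf{h}_{kl}$ untouched.
\end{itemize}
Together these force $m_{k'l}=0$. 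When $k\in\mathcal{P}_{k'}$, $\mathbf{h}_{kl}$ rotates as well, so $\mathbf{h}_{kl}^T\mathbf{w}_{k'l}$ is invariant and no such conclusion follows. That is precisely the pilot-contamination case above.
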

\begin{IEEEproof}
{It follows a standard use-and-forget approach for CF systems; see an analogous derivation in~\cite[Th.~4.6]{SIG-093}. }
\end{IEEEproof}

\subsection{Centralized Strategy}
\label{subsec:centralized}

In contrast to the distributed approach, the CPU leverages the system-wide CSI to jointly compute the precoding vectors for all APs. Let $\mathbf{v}_k \in \mathbb{C}^{M}$ denote the centralized precoding vector for user $k$, normalized such that $\mathbb{E}[||\mathbf{v}_k||^2] = 1$. It can be decomposed into $L$ local segments, i.e., $\mathbf{v}_k = [\mathbf{v}_{k1}^T, \mathbf{v}_{k2}^T, \ldots, \mathbf{v}_{kL}^T]^T$, where $\mathbf{v}_{kl} \in \mathbb{C}^{N_t}$ is the portion applied by AP $l$.  The CPU constructs the composite signal vector for the entire network:
\begin{equation} \label{eq:centralized_tx_sig}
    \mathbf{x} = \sqrt{P_s} \sum \nolimits_{k\in \mathbb{K}} \sqrt{\epsilon_k} \, \mathbf{v}_k s_k,
\end{equation}
where $P_s$ is the total system power, i.e., $P_s=Lp_a$, $\epsilon_k \in [0,1]$ is the power control coefficient assigned to user $k$, satisfying $\sum_{k\in \mathbb{K}} \epsilon_k \leq 1$. 

This full signal vector $\mathbf{x} = [\mathbf{x}_1^T, \mathbf{x}_2^T, \ldots, \mathbf{x}_L^T]^T$ is then distributed to the respective APs via fronthaul links. Each AP $l$ transmits its designated segment $\mathbf{x}_l \in \mathbb{C}^{N_t}$.
The received signal at user $k$ is given by
\begin{equation}
    y_k = \mathbf{h}_k^H \mathbf{x} + n_k = \sqrt{P_s} \, \mathbf{h}_k^H \left( \sum\nolimits_{k'\in \mathbb{K}} \sqrt{\epsilon_{k'}} \, \mathbf{v}_{k'} s_{k'} \right) + n_k,
\end{equation}
where $\mathbf{h}_k=[\mathbf{h}_{k1}^T,\ldots,\mathbf{h}_{kL}^T]^T$  is the collective channel vector from all AP antennas to user $k$, and its estimate is $\hat{\mathbf{h}}_k$.

\begin{proposition}
\label{prop:se_cent}
An achievable downlink SE for user $k$ under centralized precoding is given by $C_k = \mathbb{E} \left[ \log_2 \left( 1 + \xi_k \right) \right]$, where the effective SINR equals
\begin{equation} \label{eq:sinr_cent}
    \xi_k = \frac{
          \epsilon_k  \left| \mathbb{E} \left[ \mathbf{h}_k^T \mathbf{v}_k \right] \right|^2
    }{
         \sum\limits_{k'\in \mathbb{K} } \epsilon_{k'}  \mathbb{E} \left[ \left| \mathbf{h}_k^T \mathbf{v}_{k'} \right|^2 \right] -   \epsilon_k  \left| \mathbb{E} \left[ \mathbf{h}_k^T \mathbf{v}_k \right] \right|^2 +\dfrac{\sigma_n^2}{P_s}
    }.
\end{equation}
\end{proposition}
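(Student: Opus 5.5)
We would prove Proposition~\ref{prop:se_cent} with the same use-and-forget (hardening) bound that gives Proposition~\ref{prop:se_dist}; see \cite[Th.~4.6]{SIG-093}. The centralized model is the distributed one with the segments $\mathbf{v}_{kl}$ taken as a single network-wide precoder. The user does not know the instantaneous effective gain $\mathbf{h}_k^T\mathbf{v}_k$, only its mean, so we split the received signal into the mean-gain useful term and remainders that are uncorrelated with it. (We read $\mathbf{h}_k^H$ in the received-signal expression as the same transpose convention $\mathbf{h}_k^T$ used in the statement. The conjugation convention only relabels the precoder and does not change the argument.)

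First, dividing $y_k$ by $\sqrt{P_s}$, write
\begin{align*}
\frac{y_k}{\sqrt{P_s}} ={}& \sqrt{\epsilon_k}\,\mathbb{E}[\mathbf{h}_k^T\mathbf{v}_k]\,s_k + \sqrt{\epsilon_k}\big(\mathbf{h}_k^T\mathbf{v}_k-\mathbb{E}[\mathbf{h}_k^T\mathbf{v}_k]\big)s_k \\
&+ \sum\nolimits_{k'\neq k}\sqrt{\epsilon_{k'}}\,\mathbf{h}_k^T\mathbf{v}_{k'}s_{k'} + \frac{n_k}{\sqrt{P_s}}.
\end{align*}
The first term is the desired signal over a deterministic, known channel. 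The remaining three terms form an effective noise $\tilde n_k$.

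Second, we check that $\tilde n_k$ is uncorrelated with $s_k$:
\begin{itemize}
\item The beamforming-uncertainty term has zero mean given $s_k$, because $s_k$ is independent of the channel and precoders and $\mathbf{h}_k^T\mathbf{v}_k-\mathbb{E}[\mathbf{h}_k^T\mathbf{v}_k]$ has zero mean.
\item The interference terms vanish in correlation because the symbols are mutually uncorrelated with zero mean.
\item The noise $n_k$ is independent of everything else.
\end{itemize}
Using the worst-case uncorrelated-noise theorem (Gaussian noise minimizes mutual information for a given variance), the rate $\log_2(1+\xi_k)$ is achievable. Here $\xi_k$ is the useful power $\epsilon_k|\mathbb{E}[\mathbf{h}_k^T\mathbf{v}_k]|^2$ divided by $\mathbb{E}[|\tilde n_k|^2]$.

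Third, we compute the effective noise power. Because of the uncorrelatedness, the cross terms cancel. The variance of the uncertainty term is $\epsilon_k(\mathbb{E}[|\mathbf{h}_k^T\mathbf{v}_k|^2]-|\mathbb{E}[\mathbf{h}_k^T\mathbf{v}_k]|^2)$, and each interferer contributes $\epsilon_{k'}\mathbb{E}[|\mathbf{h}_k^T\mathbf{v}_{k'}|^2]$. Adding these and the scaled noise $\sigma_n^2/P_s$ gives exactly the denominator of \eqref{eq:sinr_cent}: the $k'=k$ summand combines with the subtracted term. Since each term of $\xi_k$ is an expectation, $\xi_k$ is deterministic, so the outer expectation in $C_k$ is trivial. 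This matches the form stated in Proposition~\ref{prop:se_dist}.

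The main obstacle is the rigor of the uncorrelatedness step when $\mathbf{v}_{k'}$ depends on the estimates of all users, including pilot-contaminated ones and the Rician phase-aware estimates. This dependence is harmless because the symbols $s_{k'}$ are independent of all CSI, so conditioning on the channels leaves the symbol correlations intact. A related subtlety is that $\sum_k\epsilon_k\le 1$ with $\mathbb{E}\|\mathbf{v}_k\|^2=1$ enforces only the average sum-power constraint $P_s$. The proposition is a capacity bound under that constraint and does not address the per-AP limits, which are handled later in the paper.
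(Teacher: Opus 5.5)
Your proposal is correct and follows the same route as the paper. The paper's proof just invokes the standard use-and-forget (hardening) bound behind Proposition~\ref{prop:se_dist}, citing Th.~4.6 of \cite{SIG-093}, and you spell out the decomposition into mean-gain signal, beamforming uncertainty, interference and noise, together with the uncorrelatedness checks that the paper leaves to that reference; your reading of $\mathbf{h}_k^H$ as the transpose convention is also a sound way to reconcile the received-signal model with $\xi_k$, since only magnitudes of the effective gains enter the SINR.
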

\begin{IEEEproof}
Follow the same approach as that of Proposition~\ref{prop:se_dist} and is consistent with Eq.~(35) in~\cite{Ref_bjornson2020scalable}.
\end{IEEEproof}

\begin{table*}[ht] 
\caption{Unnormalized Precoders to form spatial directivity}    \label{tab:complexity_summary}
    \centering
    \begin{tabular}{c|c|c|c}
        \hline \hline 
        \textbf{Methods} & CBF/MR &  RZF  & MMSE \\
        \hline
        \text{Full$\big/$Partial}  & $ \hat{\mathds{h}}_k^*$  & $\hat{\mathds{H}}^* (\hat{\mathds{H}}^T \hat{\mathds{H}}^* + \sigma_n^2 \mathbf{I}_M)^{-1}\mathbf{e}_k^M$  &  $\left(\left( p_u \hat{\mathds{H}} \mathbf{E} \hat{\mathds{H}}^H + p_u \sum\nolimits_{k\in \mathbb{K} }  \mathbf{U}_{k}\boldsymbol{\Theta}_{k}\mathbf{U}_{k}  + \sigma_n^2 \mathbf{I}_{M} \right)^{-1} \hat{\mathds{H}}\right)^*\mathbf{e}_k^M$ \\
        \hline
        \text{Local$\big/$Local Partial} & $ \hat{\mathds{h}}_{kl}^*$   & $\hat{\mathds{H}}_l^* (\hat{\mathds{H}}_l^T \hat{\mathds{H}}_l^* + \sigma_n^2 \mathbf{I}_{N_t})^{-1}\mathbf{e}_k^K$  & $ \left(\left( p_u \hat{\mathds{H}}_l \mathbf{E}_{l} \hat{\mathds{H}}_l^H + p_u \sum\nolimits_{k\in \mathbb{K} }  \mathbf{U}_{kl}\boldsymbol{\Theta}_{kl}\mathbf{U}_{kl}  + \sigma_n^2 \mathbf{I}_{N_t} \right)^{-1} \hat{\mathds{H}}_l\right)^* \mathbf{e}_k^K$ \\
        \hline \hline
 \multicolumn{4}{l@{}}{%
            \parbox{\linewidth}{%
                \vspace{0.5ex}
                \raggedright
                \footnotesize\textit{Note:}
                \begin{enumerate}
                    \setlength{\itemsep}{-0.2ex}
                    \setlength{\parskip}{0pt}
                    \setlength{\parsep}{0pt}
                    \raggedright
                    \item $\mathbf{e}_k^{K}$ denotes the $k^{\text{th}}$ column of the $K \times K$ identity matrix $\mathbf{I}_{K}$, and $\mathbf{e}_k^{M}$ denotes the $k^{\text{th}}$ column of $\mathbf{I}_{M}$.
                    \item The RZF precoder reduces to the ZF precoder when the regularization term $\sigma_n^2 \mathbf{I}$ is omitted.
                    \item Power-coefficient diagonal matrix $\mathbf{E} \triangleq \mathrm{diag}(\epsilon_{1}, \epsilon_{2}, \ldots, \epsilon_{K})$, and its local counterpart $\mathbf{E}_l \triangleq \mathrm{diag}(\eta_{1l}, \eta_{2l}, \ldots, \eta_{Kl})$. 
                    \item Error covariance $\boldsymbol{\Theta}_{k} \triangleq \mathrm{blkdiag}\big(\boldsymbol{\Theta}_{k1}, \ldots, \boldsymbol{\Theta}_{kL}\big)$, and the puncturing operation $\mathbf{U}_{k}\boldsymbol{\Theta}_{k}\mathbf{U}_{k}$ zeros the rows/columns of non-serving APs.
                \end{enumerate}
                \vspace{-1.5ex}
            }
        }
    \end{tabular}
\end{table*}

 \section{Precoder Design under Power Constraints}
To better illustrate the impact of the power constraint, we decouple precoding design into two steps: (i) formulating an unnormalized precoder to achieve the desired spatial directivity, and (ii) normalizing it to satisfy the specified power limit.  

\subsection{Unconstrained Precoder Formulation}
Three precoding techniques are typically employed, each targeting a particular spatial directivity. Specifically, CBF a.k.a. maximal-ratio (MR) maximizes the desired signal power toward each user; (regularized) zero-forcing (RZF) nullifies inter-user interference; and MMSE optimally balances interference suppression against noise amplification. Remark that \textit{partial} \cite{Ref_bjornson2020scalable} and \textit{local partial} \cite{Ref_interdonato2020local} precoding represent centralized and distributed implementations, respectively, tailored to user-centric clustering. To represent diverse schemes uniformly, we introduce a \textit{punctured} (effective) channel that reflects the cluster selection.

Begin by defining an $N_t \times N_t$ diagonal selection matrix for each AP-user pair\footnote{We employ an AP-basis selection because all antennas at one AP typically share the same large-scale fading toward a given user. A more general antenna-basis selection is possible (see \cite{Ref_bjornson2020scalable}).}:
\begin{equation}
\mathbf{U}_{kl} \triangleq 
\begin{cases}
\mathbf{I}_{N_t}, & \text{if AP $l$ serves user $k$}, \\
\mathbf{0}_{N_t}, & \text{otherwise}.
\end{cases}
\end{equation}
The punctured local channel estimate from AP $l$ to user $k$ is given by $\hat{\mathds{h}}_{kl} = \mathbf{U}_{kl} \hat{\mathbf{h}}_{kl}$, which equals $\mathbf{h}_{kl}$ if AP $l$ serves user $k$, and the zero vector otherwise.
Then, $\hat{\mathbb{H}}_l = [\hat{\mathds{h}}_{1l}, \hat{\mathds{h}}_{2l},\ldots, \hat{\mathds{h}}_{Kl}] \in\; \mathbb{C}^{N_t\times K}$ denotes the punctured local channel matrix at AP $l$.
Stacking the per-AP selectors for user \(k\) gives the block-diagonal selection matrix $\mathbf{U}_k \triangleq \mathrm{blkdiag}\big(\mathbf{U}_{k1},\mathbf{U}_{k2},\ldots,\mathbf{U}_{kL}\big)$, and yields the punctured {global} channel matrix $\hat{\mathds{H}} \triangleq \big[\hat{\mathds{h}}_1,\;\hat{\mathds{h}}_2,\;\ldots,\;\hat{\mathds{h}}_K\big]\in\mathbb{C}^{M\times K} $, where $\hat{\mathds{h}}_k = \mathbf{U}_k \hat{\mathbf{h}}_k$. 
Using this notation, different precoding schemes (cell-free full and user-centric partial), for both centralized and local implementations, are denoted uniformly, as we summarized in Table \ref{tab:complexity_summary}.

\subsection{Power-Constrained Normalization}
The above unnormalized precoders specify only the desired spatial directivity; they must be subsequently normalized to ensure compliance with the  power constraint.
\paragraph{ Distributed Per-AP Normalization}

Let $\mathbf{d}_{kl} \in \mathbb{C}^{N_t}$ denote an arbitrary local precoder, as specified in the second row of the table.  The normalized precoder is then obtained as
\begin{equation} \label{GS_localprecoder}
    \mathbf{w}_{kl} = \frac{\mathbf{d}_{kl}}{c_{kl}}, \quad \forall l \in \mathbb{L}, \; \forall k \in \mathbb{K},
\end{equation}
where $c_{kl}$ is a normalization constant. In the literature, two normalization schemes  \cite{khoshnevisan2012power} are commonly used: 1)  
\textit{Long-term normalization} sets $c_{kl} = \sqrt{\mathbb{E}\left[\|\mathbf{d}_{kl}\|^2\right]}$, ensuring $\mathbb{E}[\|\mathbf{w}_{kl}\|^2]=1$ on average \cite{Ref_interdonato2020local}; and 2) \textit{Short-term normalization} uses $c_{kl}=\|\mathbf{d}_{kl}\|$, giving $\|\mathbf{w}_{kl}\|^2=1$ in every channel realization \cite{femenias2020shortterm }.  

From a practical standpoint, we adopt short-term normalization, which is critical for compliance with:
\begin{enumerate}
    \item \textit{Instantaneous Power Limits:} Regulatory standards typically impose a strict maximum transmit power that must not be exceeded at any time instant on each AP.
    \item \textit{Amplifier Saturation:} Power amplifiers in each AP have a finite saturation point; the peak power of the transmitted signal must not exceed this maximum output.
\end{enumerate}
Substituting $\mathbf{w}_{kl} = \frac{\mathbf{d}_{kl}}{\|\mathbf{d}_{kl}\|}$ into  \eqref{eq:composite_tx_sig_dist} obtains the instantaneous transmit power at AP $l$ as $\|\mathbf{s}_{l}\|^2 = p_a \sum_{k \in \mathbb{K}_l} \eta_{kl}$\footnote{ { For simplicity, we assume constant-envelope modulation (e.g., QPSK and 8PSK) where $|s_k|^2 = 1$ holds deterministically. Under general modulation (e.g., QAM), the symbol power varies, and the expressions should be interpreted in expectation over the symbols, i.e., $\mathbb{E}[|s_k|^2] = 1$. Critically, the symbol-induced power variance (e.g., $\approx 0.32$ for 16-QAM and $0.38$ for 64-QAM) remains substantially smaller than the power-concentration effects, the primary focus of our analysis.}  }. 
Since $\sum_{k \in \mathbb{K}_l} \eta_{kl} \leqslant 1$, it follows that $\|\mathbf{s}_{l}\|^2 \leqslant p_a$, which conforms with each AP's power limit $p_a$.

\paragraph{Centralized Sum-Power Normalization}
Similarly, let $\mathbf{d}_{k} \in \mathbb{C}^{M}$ denote the unnormalized centralized precoding vector for user $k$, as specified in the first row of the table. The normalized precoder is obtained by
\begin{equation} \label{sumPower_normalization}
    \mathbf{v}_{k} = \frac{\mathbf{d}_{k}}{c_k}, \quad  \forall k \in \mathbb{K},
\end{equation}
where $c_k$ can be defined by either a long-term ($c_k=\sqrt{\mathbb{E}\left[\|\mathbf{d}_{k}\|^2\right]}$) or short-term ($c_k=\|\mathbf{d}_{k}\|$) normalization factor. Substituting \eqref{sumPower_normalization} into \eqref{eq:centralized_tx_sig} shows the sum power across all APs is constrained, i.e., $\mathbb{E}[\|\mathbf{x}\|^2] = \sum_{k \in \mathbb{K}} \epsilon_{k} \mathbb{E}[\|\mathbf{x}_l\|^2] \leqslant P_s$.
\begin{remark} \textit{
Prior works commonly adopt a long-term sum-power constraint — see, e.g., Eq.(36) of \cite{Ref_bjornson2020scalable}, \cite{miretti2021team}, Eq. (15) of \cite{Femenias2024},  and \cite{miretti2021precoding}. Some other works implicitly enforced it, e.g., Eq.~(11) in~\cite{Ref_nayebi2017precoding} and Eq.~(19) in~\cite{du2021cell} mandate that normalization factors ($\lambda_{l,k}$) and power-control coefficients ($\eta_{kl}$), respectively, be identical for any user $k$ across all APs (i.e., $\lambda_{l,k} = \lambda_k$ and $\eta_{kl} = \eta_k$, $\forall l\in\mathbb{L}$). Although not explicitly stated, it prohibits per-AP normalization and is functionally equivalent to the sum-power power constraint described by \eqref{sumPower_normalization}. }
\end{remark}

\subsection{Per-AP Constrained Centralized Precoding}
In the centralized approach, AP~$l$ transmits 
\begin{equation}
    \mathbf{x}_l = \sqrt{P_s} \sum\nolimits_{k\in \mathbb{K}} \sqrt{\epsilon_k} \, \mathbf{v}_{kl} s_k
\end{equation} with the power emitted by AP~$l$ 
\begin{equation} \label{GS_TxPowerl}
    P_l = P_s \sum_{k=1}^{K} \epsilon_k \, \|\mathbf{v}_{kl}\|^2
\end{equation} 
There is \emph{no guarantee} that $P_l\le p_a$ for every AP $l$. This kind of “cross-AP” power assignment is unrealistic, since some APs might be driven to transmit above their hardware limits, as illustrated in \figurename \ref{fig:powerdist1}. { As PA saturation imposes an instantaneous hardware limit that cannot be addressed by time- or frequency-domain averaging,}  we need {to} restrict either the precoders $\{\mathbf{v}_k\}$ or power coefficients $\{\epsilon_k\}$ {to enforce per-AP instantaneous power constraint}. In the sequel we present two {heuristic} approaches\footnote{ {Existing optimal centralized design \cite{miretti2024duality} is tailored for average per-AP power constraints; extending it to stricter instantaneous limits yields intractable optimization. As this iterative solver must execute for every small-scale fading realization (on a millisecond scale), imposing prohibitive computational burden. Moreover, the centralized process incurs significant delay from global CSI gathering, iterative solving, and precoded data distribution. This delay can lead to channel aging, thereby diminishing the theoretical gains \cite{Ref_jiang2021impactcellfree}. These limitations necessitate the low-complexity, real-time heuristic normalizations adopted here, while leaving the optimal design under instantaneous per-AP limit for future works.} } to enforce this limit:
\begin{figure}[!t] 
    \centering
    \includegraphics[width=0.37\textwidth]{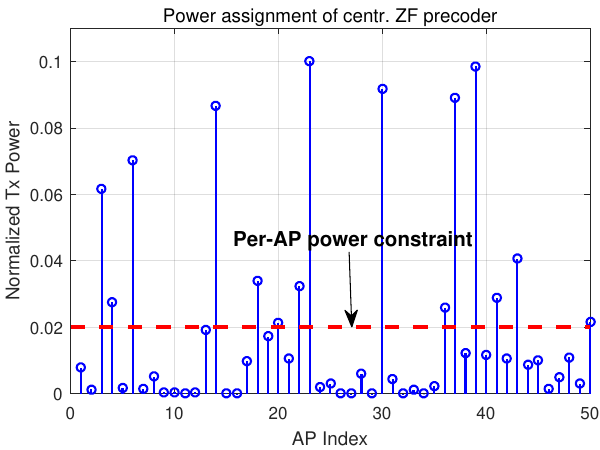}
     \caption{ {Illustration of the power concentration effect:}  distribution of normalized transmit power across APs for a CF system with $50$ single-antenna APs and $10$ users, using centralized ZF precoder with suboptimal power allocation of $\epsilon_1=\ldots=\epsilon_K$ (following Eq. (21) in \cite{Ref_nayebi2017precoding}). The horizontal (red) dashed line marks the normalized per-AP power constraint of $1/L = 0.02$.  }
     \label{fig:powerdist1}    
\end{figure}

\paragraph{Per-AP Constrained Power Scaling (PS)}
Let $P_{\max} \triangleq \max_{l\in\mathbb{L}} P_l$.
If \(P_{\max}> p_a\), scale down all power coefficients by  $ \bar{\epsilon}_k \;=\; \alpha_g\,\epsilon_k$
$\forall k$, using a factor of $\alpha_g \;=\; \frac{p_a}{P_{\max}} \in (0,1) $\footnote{ The condition $\alpha_g < 1$ indicates that unscaled precoding weights violate the per-AP instantaneous power limit $p_a$. In distributed CF topologies, the channel matrix $\mathbf{H}$ contains coefficients spanning several orders of magnitude because some APs are very close to a user while others are very distant. Centralized precoders inherently exploit this disparity by concentrating power on a few ``bottleneck'' APs to maximize spatial directivity, resulting in peak demands $P_{\max} \gg p_a$ and a severe global back-off $\alpha_g \ll 1$. {While a closed-form statistical characterization is analytically challenging, the power concentration effect is clearly explained and empirically validated.} }.
Under this update every AP meets the power constraint, as
\begin{equation}
\bar{P}_l \;=\; P_s\sum\nolimits_{k\in \mathbb{K} } \bar{\epsilon}_k \|\mathbf{v}_{kl}\|^2
             \;=\; \alpha_g\, P_l \;\le\; p_a.
\end{equation}
\textit{Drawbacks:} While this method preserves the global spatial directivity of the centralized precoders, i.e., $\|\mathbf{v}_k\|^2 = 1$, it can significantly reduce the total transmission power by $\sum_{k \in \mathbb{K}} \bar{\epsilon}_k \ll 1$, when $P_{\max} \gg p_a$.


\paragraph{Per-AP Constrained Precoder Local Normalization (LN)}
Each local portion of the centralized precoder is independently normalized as 
\begin{equation}
    \bar{\mathbf{v}}_{kl} = \frac{\mathbf{v}_{kl}}{\sqrt{L} \, \| \mathbf{v}_{kl}\| }, \quad  \forall k \in \mathbb{K},
\end{equation} where the use of the factor $\frac{1}{\sqrt{L}}$ is to guarantee that the updated  precoder 
$\bar{\mathbf{v}}_k = [\bar{\mathbf{v}}_{k1}^T, \bar{\mathbf{v}}_{k2}^T, \ldots, \bar{\mathbf{v}}_{kL}^T]^T$ still satisfies 
$\|\bar{\mathbf{v}}_k\|^2 = 1$. 
Replacing $\mathbf{v}_{kl}$ in \eqref{GS_TxPowerl} with $\bar{\mathbf{v}}_{kl}$, the transmit power at AP $l$ becomes
\begin{equation}
\bar{P}_l = P_s \sum\nolimits_{k\in \mathbb{K}} \epsilon_k  \|\bar{\mathbf{v}}_{kl}\|^2 = \frac{P_s}{L} \sum\nolimits_{k\in \mathbb{K}} \epsilon_k \leqslant p_a,
\label{eq:perAP_power}
\end{equation}
which ensures that per-AP power constraint {is met}.  
\textit{Drawbacks:} This local normalization distorts the intended spatial directivity, thereby degrading the performance of the centralized precoder.

\begin{figure*}[!tbph]
\centerline{
\subfloat[]{
\includegraphics[width=0.33\textwidth]{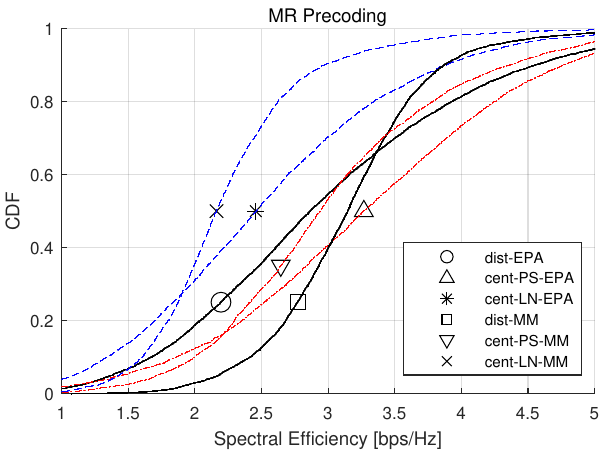}
\label{fig:mr}
}
\hspace{0mm}
\subfloat[]{
\includegraphics[width=0.33\textwidth]{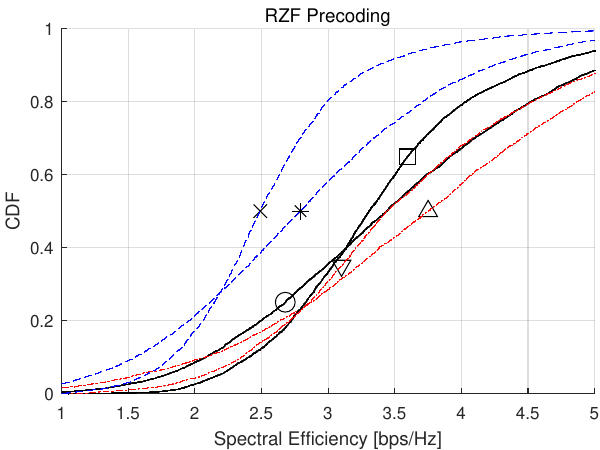}
\label{fig:rzf}
}
\hspace{0mm}
\subfloat[]{
\includegraphics[width=0.33\textwidth]{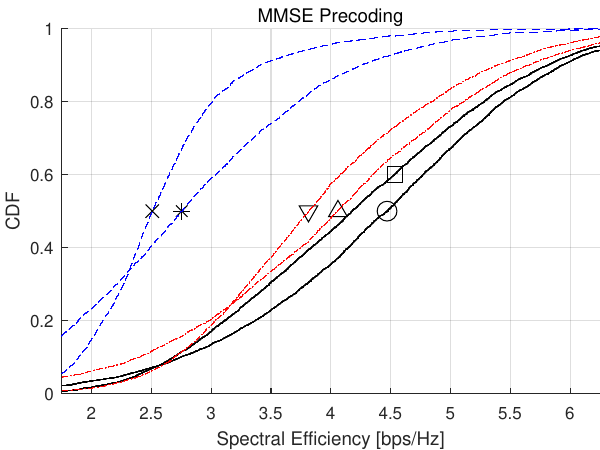}
\label{fig:mmse}
}
}
\hspace{15mm}
\caption{Performance comparison between distributed and centralized precoding for (a) MR, (b) RZF, and (c) MMSE, under both equal-power and max–min power optimization. All three subfigures use the same markers as in (a) to denote the different options. }
\label{Fig_performance}
\end{figure*}

\section{Simulation Results and Discussions}
The simulation is set up using the following parameters:
\begin{itemize}
    \item Number of distributed APs: \( L = 50 \), each with \( { N_t} = 4 \) antennas (total antennas \( M = 200 \))
    \item Number of active users: \( K = 10 \)
    \item User-centric clustering: a user served by {10 nearest APs}
    \item Coverage radius: 1 km
    \item Maximum power at user terminal: \( p_u = 200 \) mW
    \item Maximum power at AP: \( p_a = 200 \) mW
    \item Noise power spectral density: --174 dBm/Hz
    \item Noise figure: 9 dB; \: Bandwidth: 5 MHz
    \item Antenna array: Uniform linear arrays with $\lambda/2$ spacing
    \item Large-scale fading: COST-Hata model (as in \cite{Ref_ngo2017cellfree})
    \item Rician K-factor: $\kappa_{kl} \sim \mathcal{N}(8, 4^2)$ (see 3GPP TR 38.901)
    \item Spatial correlation: Gaussian scattering~\cite[Sec.~2.6]{SIG-093} with angular spread of $10^\circ$
    \item Coherence block length: $\tau_c = 200$ channel uses
    \item Pilot length: $\tau_p = 5$, resulting in pilot contamination
\end{itemize}

The cumulative distribution functions (CDFs) of spectral efficiency for the three precoding methods are illustrated in { \figurename \ref{fig:mr} to \ref{fig:mmse} }, comparing distributed (dist) with centralized (cent) precoding under equal power allocation (EPA) and max-min  (MM) power control. The $5^{th}$ percentile point on each CDF indicates the $95\%$-likely SE, the most important metric used to quantify uniform service and user fairness.  First, \figurename \ref{fig:mr} presents the CDF for MR precoding. The distributed variants (dist-EPA and dist-MM) show robust performance across the SE distribution, with max-min power control effectively improving fairness by enhancing the lower percentiles without significant sacrifice in higher ones. In contrast, centralized methods exhibit shifts to lower SE values, particularly evident in cent-LN-EPA and cent-LN-MM, where the enforcement of per-AP power constraints causes notable SE degradation. The cent-PS-MM curve achieves strong performance at the upper percentiles but still falls short of distributed MR precoding in fairness.

Since each {AP} may have fewer antennas than users (${ N_t} < K$), making $\mathbf{H}_l$ rank-deficient and ZF infeasible, RZF is adopted in our simulations. The regularization term ensures matrix invertibility and enables reliable precoding. As shown in \figurename~\ref{fig:rzf}, similar trends are observed: distributed precoding consistently achieves higher SE probabilities, particularly under max-min (MM) power control, which enhances the $95\%$-likely SE while maintaining overall efficiency. Centralized RZF with LN suffers the most, with its CDF curve shifted left, indicating reduced SE for the majority of users due to aggressive per-AP normalization. PS alleviates this degradation somewhat but still fails to outperform distributed schemes.

In contrast to Fig. 6 in \cite{Ref_bjornson2020scalable}, where centralized precoding (MMSE and P-MMSE) substantially outperforms LP-MMSE, \figurename~\ref{fig:mmse} demonstrates that distributed MMSE precoding dominates, with dist-MM achieving excellent fairness ($95\%$-likely SE of 2.38 bps/Hz). Among centralized schemes, cent-PS-MM attains a comparable $95\%$-likely SE of 2.42 bps/Hz but at the cost of reduced performance for high-percentile users. The LN-based centralized variants perform poorly, with $95\%$-likely SE values of only 1.23 bps/Hz for equal power allocation and 1.72 bps/Hz for max-min power control. {We further compare their performance in a high antenna-to-user ratio scenario ($N_t=8, K=4$), where centralized precoding still under-performs. In a nutshell, distributed precoding exhibits high robustness under realistic hardware limits}.

\section{Conclusion}
{
This letter revealed that closed-form centralized precoding, designed under sum-power constraints, suffers from a severe power concentration effect. Due to the high path-loss disparity in cell-free topologies, specific APs exhibit disproportionate power demands to achieve global spatial directivity, frequently violating realistic per-AP hardware limits. Enforcement of global scaling via low-complexity, real-time normalization methods degrade significantly centralized performance. Our results confirm that distributed precoding is a robust strategy under realistic hardware limits, in addition to its inherent merits of lower fronthaul overhead and processing latency. {Future work will explore low-complexity per-AP-constraint-aware designs and wideband extensions.}
}

\bibliographystyle{IEEEtran}
\bibliography{IEEEabrv,Ref_COML}

\end{document}